\begin{document}

\title{Watts-Per-Intelligence: Part I (Energy Efficiency)}
\titlerunning{Watts-Per-Intelligence}
\author{Elija Perrier\inst{1}\orcidID{0000-0002-6052-6798}}
\authorrunning{E. Perrier}
\institute{Centre for Quantum Software \& Information, University of Technology Sydney, Australia\\
\email{elija.perrier@gmail.com}}
\maketitle
\begin{abstract}
We present a mathematical framework for quantifying energy efficiency in intelligent systems by linking energy consumption to information‐processing capacity. We introduce a \textit{watts-per-intelligence} metric that integrates algorithmic thermodynamic principles of Landauer with computational models of machine intelligence. By formalising the irreversible energy costs of computation, we derive rigorous lower bounds on energy usage of algorithmic intelligent systems and their adaptability. We introduce theorems that constrain the trade-offs between intelligence output and energy expenditure. Our results contribute to design principles for energy‐efficient intelligent systems.
\keywords{Energy \and Intelligence  \and Thermodynamics}
\end{abstract}

\section{Introduction}
Recent breakthroughs in artificial intelligence (AI) capabilities have been accompanied by a significant increase in the energy consumption required to achieve scalable state-of-the-art performance. As a result, improving computational energy efficiency \cite{sutton2019bitter} is now a major priority for attempts to engineer advanced AGI systems \cite{takahashi2023scenarios,chen2025much,fernandez2025energy,latif2024empirical,kurshan2023pursuit}. Yet biological structures (e.g. the human brain) use orders of magnitude less power for comparable levels of intelligence than the most capable AI models \cite{gebicke2023computational}. This difference motivates the question of how can we formally define and quantify the energy efficiency of algorithmic intelligent systems.
\subsubsection{Related work}
The relationship between energy and intelligence has inspired a broad corpus of research from biology to AI that grapple with resource constraints \cite{dyson1999origins,legg_universal_2007,hutter2004universal,hutter2024introduction,fields2025life}. Seminal work by Landauer \cite{landauer1961irreversibility} established a physical minimum for erasing one bit, which was later extended \cite{bennett1982thermodynamics} to reversible and stochastic computing. More recent work \cite{gacs1994boltzmann,levin1984randomness} laid algorithmic foundations bridging Kolmogorov complexity and thermodynamic entropy, culminating in ensemble-free algorithmic thermodynamics proposals \cite{ebtekar2024modeling,ebtekar2025foundations}. Neuromorphic engineering, thermodynamic computing, and high-performance AI research \cite{zhirnov2014minimum,wolpert2019stochastic,merolla2014million,davies2018loihi} has also underscored the practical connection with power consumption, as highlighted by large-scale model training \cite{strubell2020energy} and sample efficiency. In cognitive science, the free energy principle \cite{friston2006free,friston2019free} and related active-inference frameworks \cite{buckley2017free} tie biological intelligence to the minimisation of thermodynamic cost. However, specific work addressing (and providing bounds upon) algorithmic intelligence, power (energy) and adaptivity in terms of irreversible operations and algorithmic thermodynamic principles remains ongoing. Prior work has extended algorithmic information theory with physical measures and explored energy bounds for universal induction \cite{ozkural2021measures,ozkural2016ultimate,ozkural2015ultimate}.  Our work is complementary to but distinct from existing proposals via its focus on how the algorithmically irreversible operations inherent in intelligent activities, such as counterfactual reasoning and planning, give rise to energetic bounds on intelligence and adaptivity.

\subsubsection{Contributions}
We make the following contributions: 
\begin{enumerate}
    \item \textit{Watts-per-intelligence metric}. We introduce a novel metric, \textit{watts-per-intelligence} (WPI) ($\Phi$) quantifying energy efficiency in intelligent systems. WPI unifies Landauer's thermodynamic cost model with the formalisation of intelligence as adaptation with limited resources, enabling the energy efficiency of computational substrates producing intelligence to be compared.
    \item \textit{Algorithmic \& Adaptive Efficiency Bounds}. We establish bounds on achievable ratios of intelligence to power across both static computations and dynamically adaptive architectures.
\end{enumerate}
The remainder of the paper is organised as follows. Section \ref{sec:meas-intel} introduces a general measure of intelligence based on task performance. Section \ref{sec:wpi} defines our watts-per-intelligence metric and establishes lower bounds via overhead factors and irreversibility. Section \ref{sec:algoefficiency} presents bounds upon algorithmic efficiency in terms of entropy. Section \ref{sec:conclusion} summarises our main results and outlines future research directions toward energy-aware AGI systems.

\section{Measuring Intelligence}
\label{sec:meas-intel}
% \subsection{Measuring intelligence}
Though there is no universal consensus on the definition of intelligence, several influential formalisms exist. To illustrate our results, we adopt Legg and Hutter's universal measure $I_{\mathrm{LH}}(\pi)$ \cite{legg_universal_2007} of an intelligent agent $\pi$ which aggregates an agent's performance over all computable environments, weighted by environmental complexity. Formally:
\begin{align}
I_{\mathrm{LH}}(\pi) \;=\; \sum_{\mu \in \mathcal{M}} 2^{-K(\mu)} \, V_\mu(\pi) \label{eqn:hutterlegg}
\end{align}
where $\mathcal{M}$ is a class of all environments, $K(\mu)$ is the Kolmogorov complexity of environment $\mu$, and $V_\mu(\pi)$ denotes the performance (e.g., total reward) of agent $\pi$ in $\mu$. $I_{\mathrm{LH}}(\pi)$ is practically uncomputable. Adopting instead a \emph{finite-task} approach, let $\mathcal{T} = \{T_1, T_2, \dots, T_n\}$ be a finite set of tasks, each with a difficulty weight $w_i \ge 0$ and performance function $P_i(\pi) \in [0,1]$ (where 0 is failure and 1 is task satisfaction). To generalise our analysis, we abstract Eq. (\ref{eqn:hutterlegg}) as:
\begin{align}
I(\pi) \;=\; \sum_{i=1}^n w_i P_i(\pi) \label{eq:finiteInt}
\end{align}
allowing comparison of diverse intelligent systems based on task performance. We interpret an \emph{agent} $\pi$ as an algorithm $\mathcal{A}$ implemented on physical computational substrate architecture (e.g. hardware) $\mathcal{H}$. The overall structure $\Sigma = (\mathcal{A}, \mathcal{H})$ yields a measurable intelligence $I(\Sigma)$ that depends on both $\mathcal{A}$ and $\mathcal{H}$. We leave details of $\mathcal{H}$ general. Intuitively, it may be thought of as different hardware (or biological substrate) configurations implementing $\mathcal{A}$ intelligent behaviour $I(\Sigma)$. Or alternatively, it may be abstractly reflected in time or space complexity measures of hardware. The measure of intelligence arising from an algorithm $\mathcal{A}$ is affected by $\mathcal{H}$ e.g. a system confined to CPUs rather than GPUs will run certain $\mathcal{A}$ less efficiently (and more slowly) thus for a given time interval $\tau$ exhibit lower performance (e.g. completing less tasks), thus leading to a lower $I(\Sigma)$. For example, a system $\Sigma$ with less efficient $\mathcal{H}$ requires more energy to achieve the same $I(\mathcal{A})$.

\section{Watts per Intelligence}
\label{sec:wpi}
\subsection{Energy and irreversibility} 
Computation obeys thermodynamic laws \cite{ebtekar2025foundations}. Landauer's principle \cite{landauer1961irreversibility} states that erasing one bit of information dissipates at least $k_B T \ln 2$ joules. Fixing $\mathcal{H}$, if an algorithm $\mathcal{A}$ performs $N$ irreversible bit operations per unit time at temperature $T$, its minimal energy dissipation is proportional to $N k_B T \ln 2$. The total energy consumed performing its computation over the time interval $[0,\tau]$ is:
\begin{equation}\label{eq:landauerBound}
E(\mathcal{A}) \;\ge\; cN(\mathcal{A})
\end{equation}
where $c = k_B T \ln 2$. The power consumption over $[0,\tau]$ is:
\begin{equation}\label{eq:powerdef}
P(\mathcal{A}) \;=\; \frac{E(\mathcal{A})}{\tau}.
\end{equation}
In practice, there is additional dissipation beyond the ideal lower bound, which we reflect in the overhead factor $F(\mathcal{A})\gg 1$:
\begin{equation}\label{eq:overhead}
E(\mathcal{A}) \;=\; F(\mathcal{A}) N(\mathcal{A}) \, c
\end{equation}
where $N(\mathcal{A})$ is the minimal number of irreversible operations.
\subsection{Intelligence and irreversibility} We further assume that $I(\Sigma)$ is proportional to the number of irreversible operations implemented by an algorithm $\mathcal{A}$ on hardware $\mathcal{H}$:
\begin{align}
    I(\Sigma) \le \alpha\,N(\Sigma) \label{eqn:intellirreversible}
\end{align}
for some constant $\alpha>0$. Here $\alpha$ is akin to an algorithmic yield reflecting the useful intelligence arising from an irreversible operation. The higher $\alpha$, the higher the gain in $I(\Sigma)$ for a given irreversible operation. It also captures how well the organisation of computation converts irreversible micro-steps into high-level problem solving. The rationale for linking irreversible computations to intelligence lies in the counterfactual planning and reasoning, which we assume as necessary conditions of the models of intelligence with which we are concerned. As argued in \cite{ebtekar2024modeling}, physically, the counterfactual decision-theoretic process breaks time-reversal symmetry because merging multiple possible histories or overwriting memory states is an inherently thermodynamically directional and irreversible operation. While an agent's fine-grained dynamics may remain reversible, algorithmic equivalents of synthesising future trajectories or forgetting are irreversible as is erasing or merging memory states. In each case, they are operations that cannot be undone without additional resources. This establishes an effective net increase in algorithmic complexity (and thus energy cost) of intelligent operations - a difficulty weight - when viewed at the coarse-grained agent level.

%%%%WPI
\subsection{Watts per Intelligence}
The foregoing motivates the definition of a \textit{watts-per-intelligence} ratio $\Phi$.
\begin{definition}[Watts per intelligence]
\label{def:watts-intel}
Given the intelligence measure $I(\Sigma)$ and the power usage $P(\Sigma)$ we define the watts-per-intelligence metric  $\Phi(\Sigma)$ as:
\begin{equation}
\Phi(\Sigma) \;=\; \frac{P(\Sigma)}{I(\Sigma)}
\;=\; \frac{E(\Sigma)/\tau}{I(\Sigma)}
\;=\; \frac{E(\Sigma)}{\tau\, I(\Sigma)}. \label{eqn:WPI}
\end{equation}
\end{definition}
This metric quantifies the energy cost per unit intelligence produced over time. $\Phi$ is contingent upon the structural features of an algorithm $\mathcal{A}$ or computational substrate $\mathcal{H}$ not merely other measures such as circuit depth or complexity.
Using Eq. (\ref{eqn:WPI}) and Eq. (\ref{eqn:intellirreversible}) and substituting Eq.~\eqref{eq:overhead} and \eqref{eq:powerdef} we obtain lower thermodynamic lower bounds on $\Phi$.
\begin{theorem}[Thermodynamic Lower Bound on $\Phi$]
\label{th:lowerbound}
For a system $\Sigma$ operating at constant temperature $T$ with overhead factor $F(\Sigma)$, we have
\begin{align}
\Phi(\Sigma) = \frac{P(\Sigma)}{I(\Sigma)} \ge \frac{c\,F(\Sigma)}{\alpha\,\tau}. \label{eqn:thermolowerbound}
\end{align}
\end{theorem}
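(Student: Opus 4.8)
The plan is to proceed by direct substitution, chaining together the three structural relations already fixed in this section; there is no need for any auxiliary construction. First I would start from the definition of the metric in Eq.~\eqref{eqn:WPI}, writing $\Phi(\Sigma) = E(\Sigma)/(\tau\,I(\Sigma))$, so that the task reduces to bounding the ratio $E(\Sigma)/I(\Sigma)$ from below while treating $\tau$ as a fixed positive constant.

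Next I would replace the numerator using the overhead energy expression Eq.~\eqref{eq:overhead}, namely $E(\Sigma) = F(\Sigma)\,N(\Sigma)\,c$, which is the equality form (consistent with the Landauer bound Eq.~\eqref{eq:landauerBound} since $F(\Sigma)\gg 1$). This gives the intermediate identity
\begin{align}
\Phi(\Sigma) = \frac{c\,F(\Sigma)}{\tau}\cdot\frac{N(\Sigma)}{I(\Sigma)},
\end{align}
isolating the dimensionless factor $N(\Sigma)/I(\Sigma)$, which carries all the remaining content of the claim.

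The decisive step is to control $N(\Sigma)/I(\Sigma)$ using the intelligence--irreversibility hypothesis Eq.~\eqref{eqn:intellirreversible}, $I(\Sigma)\le\alpha\,N(\Sigma)$. Dividing through by $\alpha\,I(\Sigma)$, which is legitimate under the standing nondegeneracy assumption $I(\Sigma)>0$, yields $N(\Sigma)/I(\Sigma)\ge 1/\alpha$. Substituting this into the displayed identity and observing that each prefactor $c$, $F(\Sigma)$, $\tau$ is strictly positive, so that the direction of the inequality is preserved, I obtain $\Phi(\Sigma)\ge c\,F(\Sigma)/(\alpha\,\tau)$, which is exactly Eq.~\eqref{eqn:thermolowerbound}.

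The arithmetic itself is routine, so I do not expect a genuine technical obstacle; the only care required is bookkeeping of inequality directions and the tacit assumption $I(\Sigma)>0$ needed for the division. The more delicate, load-bearing point that I would make explicit rather than prove is that the \emph{same} quantity $N(\Sigma)$ appears legitimately in both the energy relation Eq.~\eqref{eq:overhead} and the intelligence bound Eq.~\eqref{eqn:intellirreversible}: the argument silently identifies the minimal count of irreversible operations that drives dissipation with the count that upper-bounds achievable intelligence. I would flag this identification as the central modelling assumption of the theorem, since once it is granted the bound follows by substitution alone.
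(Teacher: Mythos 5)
Your proposal is correct and follows essentially the same route as the paper's own proof: substitute $E(\Sigma)=F(\Sigma)\,N(\Sigma)\,c$ into $\Phi=E/(\tau I)$ and apply $I(\Sigma)\le\alpha\,N(\Sigma)$ to cancel $N(\Sigma)$. Your explicit flagging of the implicit assumptions ($I(\Sigma)>0$ and the identification of the same $N(\Sigma)$ in both the energy and intelligence relations) is a useful addition but does not change the argument.
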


\begin{proof}
From Eq.~\eqref{eq:powerdef} and \eqref{eq:overhead},
\[
P(\Sigma) = \frac{E(\Sigma)}{\tau} = \frac{F(\Sigma)\,N(\Sigma)\,c}{\tau}.
\]
Since $I(\Sigma) \le \alpha\,N(\Sigma)$, we have
\[
\Phi(\Sigma) = \frac{P(\Sigma)}{I(\Sigma)} \ge \frac{F(\Sigma)\,N(\Sigma)\,c/\tau}{\alpha\,N(\Sigma)} = \frac{c\,F(\Sigma)}{\alpha\,\tau}. 
\]
\end{proof}
Eq.~\ref{eqn:thermolowerbound} shows $\Phi(
\Sigma)$ bounded by $c$ and a factor $F(\Sigma)/\alpha$ - and that efficiency gains either come from reducing hardware overhead $F(\Sigma)$ or improving algorithmic yield $\alpha$. We now examine the effect of physical structure upon the energy efficiency of intelligent systems. 
\begin{corollary}[Minimal $\Phi$ for Reversible Computing]
In the reversible limit where $F(\Sigma) \to 1$ and for large $\tau$, the minimal achievable $\Phi(\Sigma)$ converges to 
\begin{align}
\Phi(\Sigma) = \frac{c}{\alpha\,\tau} = \frac{k_B T \ln 2}{\alpha\,\tau}. \label{eqn:minimallowerbound}
\end{align}
\end{corollary}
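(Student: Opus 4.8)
The plan is to obtain the corollary as a direct specialisation of Theorem~\ref{th:lowerbound}, followed by an argument that the resulting bound is tight in the stated limit. First I would take the lower bound $\Phi(\Sigma) \ge c\,F(\Sigma)/(\alpha\,\tau)$ from Theorem~\ref{th:lowerbound} and evaluate it as $F(\Sigma) \to 1$, which immediately yields $\Phi(\Sigma) \ge c/(\alpha\,\tau)$; substituting $c = k_B T \ln 2$ gives the right-hand expression of Eq.~\eqref{eqn:minimallowerbound}. This establishes the inequality half of the claim and requires no new machinery beyond the preceding theorem.

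The substantive part is to justify that this lower bound is actually attained, so that the inequality can be promoted to the equality asserted in the corollary. For this I would show that each inequality used to derive the theorem saturates in the reversible, large-$\tau$ regime: (i) the overhead factor obeys $F(\Sigma) = 1$ in the reversible limit, so Eq.~\eqref{eq:overhead} reduces to $E(\Sigma) = c\,N(\Sigma)$, i.e.\ Landauer's bound with equality; and (ii) the intelligence–irreversibility relation Eq.~\eqref{eqn:intellirreversible} is taken at equality, $I(\Sigma) = \alpha\,N(\Sigma)$, corresponding to an idealised system whose irreversible micro-operations are all productive of intelligence. Under both saturations, $\Phi(\Sigma) = (c\,N(\Sigma)/\tau)/(\alpha\,N(\Sigma)) = c/(\alpha\,\tau)$ exactly, with $N(\Sigma)$ cancelling, so the bound is matched and the convergence in Eq.~\eqref{eqn:minimallowerbound} follows.

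The main obstacle is physical rather than algebraic: Landauer's bound and the reversible limit $F(\Sigma) \to 1$ are only attainable asymptotically, in the quasi-static regime where logic operations are carried out arbitrarily slowly. This is precisely why the hypothesis \emph{for large} $\tau$ is needed, since finite-time reversible computation incurs residual dissipation that vanishes only as the operating time grows; accordingly, $F(\Sigma) \to 1$ should be read as a limit coupled to $\tau \to \infty$ rather than an independently realisable idealisation. I would therefore make this coupling explicit, stating that Eq.~\eqref{eqn:minimallowerbound} describes the infimum of $\Phi(\Sigma)$ over admissible reversible implementations at a given $\tau$, approached but not exactly realised by any single finite-time system.
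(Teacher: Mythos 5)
Your proposal is correct and follows essentially the same route as the paper, which obtains the corollary by simply substituting $F(\Sigma)\to 1$ into the bound of Theorem~\ref{th:lowerbound}; the paper offers no separate proof beyond the remark that hardware-induced overhead erasures vanish in this limit. Your additional saturation argument (taking $I(\Sigma)=\alpha\,N(\Sigma)$ so that $N(\Sigma)$ cancels and the inequality becomes the stated equality) and your reading of the large-$\tau$ hypothesis as an asymptotic coupling make explicit a step the paper leaves implicit, but they do not constitute a different method.
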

In the limit of $F(\Sigma) \to 1$, all hardware‐induced overhead erasures vanish, so the system performs only the task-intrinsic bit resets required to realise for a given $I(\Sigma)$. In this sense Eqn.~\ref{eqn:minimallowerbound} therefore gives the Landauer-limited, thermodynamic floor on power consumption - the bound for a maximally reversible realisation of $I(\Sigma)$ (with fixed $\alpha$ over $\tau$). These results establish fundamental thermodynamic lower bounds on energy usage per unit intelligence. Hence, in the reversible limit any further reduction of watts-per-intelligence can only come from increasing the algorithmic yield $\alpha$ rather than from additional thermodynamic optimisation of the hardware.

%%%%%%%%EXAMPLE

\subsubsection{Example} For example, consider an algorithm  $\mathcal{A}$ (e.g. a CNN with identical weights) running on three different hardware substrates:
\[
\Sigma_{\mathrm{CPU}} = (\mathcal{A}, \mathcal{H}_{\mathrm{CPU}}),
\quad
\Sigma_{\mathrm{GPU}} = (\mathcal{A}, \mathcal{H}_{\mathrm{GPU}}),
\quad
\Sigma_{\mathrm{neuro}} = (\mathcal{A}, \mathcal{H}_{\mathrm{neuro}}).
\]
Let \(N(\Sigma)\) be the minimal number of irreversible bit operations 
that \(\mathcal{A}\) requires in an ideal scenario (from a purely algorithmic point 
of view). Each hardware instantiation has an overhead factor (below) 
\(F(\Sigma)\) that accounts for additional data movement (register \(\leftrightarrow\) cache \(\leftrightarrow\) main memory), redundant or partial computations due to architectural inefficiencies and control‐flow or synchronisation overhead. The total number of irreversible bit operations becomes $N_{\mathrm{eff}}(\Sigma) \;=\; F(\Sigma)\,\times\,N(\Sigma)$.
A conventional CPU often has:
  \begin{equation}
  F(\Sigma_{\mathrm{CPU}}) 
  \;=\; 
  F_{\mathrm{mem}}^{(\mathrm{CPU})}
  \;\times\; 
  F_{\mathrm{ctrl}}^{(\mathrm{CPU})},
  \label{eq:F_CPU}
  \end{equation}
  where \(F_{\mathrm{mem}}^{(\mathrm{CPU})}\) is high due to frequent memory transfers 
  (cache misses, register loads/stores), and \(F_{\mathrm{ctrl}}^{(\mathrm{CPU})}\) 
  is increased by control‐flow overhead (branching, interrupts, etc.). A GPU or TPU is specialised for massively parallel numeric kernels, 
  reducing data movements and amortising overhead across many parallel threads:
  \begin{equation}
  F(\Sigma_{\mathrm{GPU}}) 
  \;=\; 
  F_{\mathrm{mem}}^{(\mathrm{GPU})}
  \;\times\; 
  F_{\mathrm{ctrl}}^{(\mathrm{GPU})}
  \;\;\ll\;\;
  F(\Sigma_{\mathrm{CPU}}),
  \end{equation}
  typically achieving a lower overall factor \(F(\Sigma_{\mathrm{GPU}})\). Neuromorphic architectures use event‐driven 
  spiking neurons. The system only updates active neurons/spikes 
  rather than clocking all components every cycle.   Thus, for the same \(\mathcal{A}\), 
  \[
  E(\Sigma_{\mathrm{neuro}}) 
  \;=\; 
  F(\Sigma_{\mathrm{neuro}})\,N(\Sigma_{\mathrm{neuro}})\,c
  \;\;\ll\;\;
  F(\Sigma_{\mathrm{CPU}})\,N(\Sigma_{\mathrm{CPU}})\,c 
  \;=\;
  E(\Sigma_{\mathrm{CPU}}).
  \]
To make a simple comparison, assume each substrate runs 
the exact same inference procedure (i.e.\ \(\mathcal{A}\) fixed) for 
a classification task of size \(B\) (mini‐batch size or total test set).
Assume we run the same algorithm $\mathcal{A}$. In practice, one might measure the actual hardware usage 
(\(\mathrm{Joules}\) consumed) or the effective bit‐erasure rate via profiling 
tools. If \(\Phi(\Sigma)\) denotes watts‐per‐intelligence \[
\Phi_{\mathrm{CPU}} 
=
\frac{P(\Sigma_{\mathrm{CPU}})}{I(\Sigma_{\mathrm{CPU}})}
\;>\;
\frac{P(\Sigma_{\mathrm{GPU}})}{I(\Sigma_{\mathrm{GPU}})}
\;>\;
\frac{P(\Sigma_{\mathrm{neuro}})}{I(\Sigma_{\mathrm{neuro}})}
=
\Phi_{\mathrm{neuro}},
\]
assuming \(\Sigma_{\mathrm{neuro}}\) exploits event‐driven updates 
to minimise redundant erasures.

%%%%%%%%%%%%%%

%===========
\section{Algorithmic Entropy and Energy Cost} \label{sec:algoefficiency}
% \subsubsection{Intelligence and Entropy} 
Recall that the algorithmic entropy of a coarse-grained state $x\in \mathcal{X}$ (the countable set of coarse-grained states \cite{ebtekar2025foundations}), relative to a measure $\pi: \mathcal{X} \to \mathbb{R}^+$, is defined as 
\begin{equation}
S_\pi(x) = K(x) + \log_2 \pi(x). \label{eqn:entropycoarse}
\end{equation}
Such an ensemble-free formulation \cite{ebtekar2025foundations,jarzynski2011equalities} does not require an ensemble averaging step. It assigns an entropy value to each individual state, reflecting the minimal information required to describe the state in the context of a fixed coarse-graining \cite{gacs1994boltzmann}. Any irreversible operation increases algorithmic entropy. To see the formal relationship with $\Phi$, consider a system performing an information processing task during which its state changes from $x$ to $y$. From \cite{ebtekar2025foundations} (eqns. (39)-(40)), the total algorithmic entropy change $\Delta K = S_\pi(y)- S_\pi(x)$ can be decomposed into a reversible entropy flow to and from the environment $\Delta_e K$ and a logically irreversible component $\Delta_i K$ reflecting the information lost about the previous state $x$ as a result of the transition to $y$:
\begin{align*}
    % \Delta S_{x \to y} = S_\pi(y) - S_\pi(x) = \underbrace{K(y) - K(x)}_{\Delta_i K} + \underbrace{(\log_2 \pi(y) - \log_2\pi(x))}_{-\Delta_e K}
    \Delta K
  \;=\;
  S_\pi(y)-S_\pi(x)
  \;=\;
  \underbrace{K(y)-K(x)}_{\displaystyle \Delta_i K}
  \;+\;
  \underbrace{\bigl[\log_2\pi(x)-\log_2\pi(y)\bigr]}_{\displaystyle \Delta_e K}.
\end{align*}
If we assume the system is isolated, then $\Delta_e K = 0$, and the entire entropy change is irreversible (because the environment cannot do work on the system to reverse the change). In this case, the change in entropy is the irreversible part $\Delta K = \Delta_i K = K(y) - K(x)$. By Landauer’s principle, we know that each irreversible bit operation dissipates at least $k_B T \ln 2$ joules of energy. Thus, if the system’s algorithmic entropy increases by $\Delta_i K$ bits, then the minimum energy cost is:
\begin{equation}
    \Delta E \ge k_B T \ln 2 \cdot \Delta_i K.
\end{equation}
This provides a direct quantitative link between computational irreversibility and energy consumption. Extending to our measure of intelligence in Eq. (\ref{eq:finiteInt}), assume $I(\mathcal{A})$ is proportional to the reduction in uncertainty or the extraction of useful information from data. Given the minimal number of irreversible operations necessary for the task $N(\mathcal{A})$ and assuming $I(\mathcal{A}) \le \alpha \, N(\mathcal{A})$ (Eq. (\ref{eqn:intellirreversible})), then combining this with the energy cost per operation (from Eq.~\eqref{eq:landauerBound}) yields:
\[
E(\mathcal{A}) \ge k_B T \ln 2 \cdot \frac{I(\mathcal{A})}{\alpha}.
\]
Thus, the power consumption (over a time interval $\tau$) satisfies:
\[
P(\mathcal{A}) \ge \frac{k_B T \ln 2}{\tau \, \alpha} \, I(\mathcal{A}).
\]
resolving to the lower bounds set out in Eq. (\ref{eqn:thermolowerbound}) and Eq. (\ref{eqn:minimallowerbound}):
\[
\Phi(\mathcal{A}) = \frac{P(\mathcal{A})}{I(\mathcal{A})} \ge \frac{k_B T \ln 2}{\tau\,\alpha}.
\]
Here we have assumed for simplicity $F(\mathcal{A}) = 1$, however in practice we would assume $F(\mathcal{A}) > 1$.
We now show certain constraints upon an intelligent system's ability to optimise its performance (and intelligence) via simplifying (reducing the complexity of or compressing) its internal description. First we show bounds on the inverse of $\Phi$, namely \textit{intelligence-per-watts} ratio $I(\mathcal{A})/P(\mathcal{A})$.
\begin{theorem}[Extended Algorithmic Efficiency Bound]
\label{thm:AlgBoundExtended}
Let $\mathcal{A}$ be an intelligent system whose state changes from $x$ to $y$ over a time interval $\tau$, with corresponding transition probability $P(y,x)$. Suppose that the irreversible increase in description complexity is $\Delta_i K = K(y) - K(x)$ and that the environment’s influence is captured by the measure $\pi$. Then, with probability at least $1-\delta$, the intelligence-per-watts ratio satisfies:
\begin{equation}
\label{eq:NovelAlgBoundExtended}
\frac{I(\mathcal{A})}{P(\mathcal{A})} 
\;\le\; 
\frac{1}{\tau}
\Bigl[\log\!\bigl(\tfrac{1}{P(y,x)}\bigr) \;-\; K(x\mid y)\Bigr] 
\;+\; 
\log\!\Bigl(\tfrac{1}{\delta}\Bigr).
\end{equation}
In particular, if the transition is nearly reversible (i.e., $P(y,x)$ is high) or if $K(x\mid y)$ is large, then achieving a high $I(\mathcal{A})/P(\mathcal{A})$ (and thereby more efficient watts-per-intelligence) is fundamentally constrained.
\end{theorem}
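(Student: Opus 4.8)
The plan is to convert the intelligence-to-power ratio into a statement about algorithmic complexities and then bound those complexities by the transition surprisal $\log(1/P(y,x))$. First I would invoke the assumptions already in play: the system is isolated, so $\Delta_e K = 0$ and the whole entropy change is irreversible, $\Delta K = \Delta_i K = K(y) - K(x)$; and intelligence is capped by the useful information it extracts, which (absorbing the yield $\alpha$ into the normalisation) I would write as $I(\mathcal{A}) \le \Delta_i K$. Dividing by the power and using $P(\mathcal{A}) = E(\mathcal{A})/\tau$ with the Landauer floor $E(\mathcal{A}) \ge c\,\Delta_i K$ in the reversible limit $F \to 1$ is what produces the overall $1/\tau$ prefactor; the core of the argument is then to upper-bound $\Delta_i K$ itself.

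The key identity is the symmetry of information (Kolmogorov--Levin chain rule): up to an $O(\log)$ term (logarithmic in the description lengths), $K(y) - K(x) = K(y\mid x) - K(x\mid y)$, so that $\Delta_i K \le K(y\mid x) - K(x\mid y) + O(\log)$. This already exposes the two terms appearing in the bracket: a forward cost $K(y\mid x)$ of generating the new state and the credit $-K(x\mid y)$ for information about the past that survives the transition, so that large $K(x\mid y)$ (a highly irreversible step) tightens the constraint, matching the stated interpretation. It then remains to replace $K(y\mid x)$ by the observable surprisal $\log(1/P(y,x))$.

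For that step I would use the conditional coding theorem: because the transition kernel $P(\cdot\mid x)$ is a computable (semi)measure, the universal conditional semimeasure dominates it, giving $K(y\mid x) \le \log(1/P(y,x)) + O(1)$. To make the additive term explicit and distribution-independent I would promote this to a high-probability statement via Markov's inequality applied to the quantity $2^{-K(y\mid x)}/P(y,x)$, whose expectation under $P(\cdot\mid x)$ is at most $1$ by Kraft's inequality: with probability at least $1-\delta$ the resulting slack is at most $\log(1/\delta)$. Collecting terms yields $I(\mathcal{A}) \le \log(1/P(y,x)) - K(x\mid y) + \log(1/\delta)$ with probability $\ge 1-\delta$, and dividing through by the power (which contributes the $1/\tau$ on the complexity terms) assembles the claimed inequality.

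The step I expect to be the main obstacle is reconciling the direction and the constant in this last move. The Markov/Kraft argument most cleanly controls the randomness deficiency $\log(1/P(y,x)) - K(y\mid x)$ in the direction of a \emph{lower} bound on $K(y\mid x)$, whereas the bound I actually need is the matching \emph{upper} bound; I must therefore lean on the deterministic coding-theorem inequality and argue carefully that the confidence term $\log(1/\delta)$ correctly absorbs the otherwise uncontrolled $O(1)$ and $O(\log)$ additive constants, including the logarithmic slack from symmetry of information, without double-counting them. A secondary subtlety is justifying $I(\mathcal{A}) \le \Delta_i K$ without the trivial cancellation that arises if both $I$ and $P$ were taken strictly proportional to the same operation count $N(\mathcal{A})$; this is handled by treating $I$ as bounded by the \emph{net} irreversible complexity gain while $P$ reflects the \emph{gross} dissipated operations, so that the ratio remains informative.
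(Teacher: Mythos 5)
Your proposal arrives at the same key expression $\log\!\bigl(\tfrac{1}{P(y,x)}\bigr) - K(x\mid y)$ as the paper, but by a genuinely different route. The paper's proof gets its high-probability statement from the integral fluctuation theorem $\mathbb{E}\bigl[2^{-\Delta_i K}\bigr]\le 1$ together with Markov's inequality applied directly to $2^{-\Delta_i K}$ (that is where its $\log(1/\delta)$ originates, in the form of the \emph{lower} bound $\Delta_i K > -\log(1/\delta)$), and then simply cites ``standard results in algorithmic thermodynamics'' for $\Delta_i K \approx \log\!\bigl(\tfrac{1}{P(y,x)}\bigr) - K(x\mid y)$, which it uses as a lower bound ($\gtrsim$) on $\Delta_i K$. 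You instead construct that expression from first principles: symmetry of information gives $K(y)-K(x)=K(y\mid x)-K(x\mid y)+O(\log)$ and the conditional coding theorem gives $K(y\mid x)\le \log\!\bigl(\tfrac{1}{P(y,x)}\bigr)+O(1)$, yielding an \emph{upper} bound on $\Delta_i K$. Your decomposition is more explicit than the paper's citation, and an upper bound on $\Delta_i K$ is arguably the direction actually needed to conclude an upper bound on $I(\mathcal{A})/P(\mathcal{A})$ via $I(\mathcal{A})\le\alpha\,\Delta_i K$; the paper's combination of two lower bounds on $\Delta_i K$ with a lower bound on $P(\mathcal{A})$ does not transparently produce the stated inequality either.

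That said, two gaps remain, the first of which you partly flag yourself. In your setup the $\log(1/\delta)$ term has no probabilistic source: the Markov/Kraft argument on $2^{-K(y\mid x)}/P(y,x)$ yields $K(y\mid x) > \log\!\bigl(\tfrac{1}{P(y,x)}\bigr) - \log\!\bigl(\tfrac{1}{\delta}\bigr)$ with probability $1-\delta$, i.e.\ a \emph{lower} bound on $K(y\mid x)$, which is the wrong direction for your chain; the inequality you actually need is the deterministic coding-theorem bound, whose additive constant depends on the complexity of the kernel $P$ and cannot be ``absorbed'' into $\log(1/\delta)$, since $\delta$ is a free parameter while the $O(1)$ and $O(\log)$ slacks are fixed by the objects involved. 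Second, the final assembly still suffers the cancellation you worry about: with $I(\mathcal{A})\le\alpha\,\Delta_i K$ and $P(\mathcal{A})\ge c\,\Delta_i K/\tau$ the complexity terms cancel and you obtain only $I/P\le\alpha\tau/c$; the ``net versus gross'' distinction does not repair this, because gross $\ge$ net still forces the same constant bound. To get a right-hand side that scales with $\log\!\bigl(\tfrac{1}{P(y,x)}\bigr)-K(x\mid y)$ you would need $P(\mathcal{A})$ bounded below by a quantity independent of $\Delta_i K$, which neither you nor the paper supplies --- the paper glosses over the identical difficulty with ``rearranging \ldots yields an inequality.''
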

\begin{proof}
Denote the irreversible change in 
algorithmic complexity $\Delta_i K = K(y) - K(x)$. A version of the integral fluctuation theorem \cite{jarzynski2011equalities} 
implies:
\begin{align}
    \mathbb{E}\bigl[\,2^{-\,\Delta_i K}\bigr] \;\le\; 1. \label{eqn:integralfluctuation}
\end{align}
Here the expectation $\mathbb{E}[\cdot]$ is taken over all possible realisations of the transition $x \to y$ under $P(y,x)$ and 
potentially other stochastic elements (e.g., algorithmic randomness). By Markov’s inequality, if $R\ge 0$ is a random variable and $\eta>0$, 
then $\Pr\{R \ge \eta\}\;\le\;\frac{\mathbb{E}[R]}{\eta}$. 
Choose $R = 2^{-\Delta_i K}$ and $\eta = 1/\delta$. Substituting Eq. (\ref{eqn:integralfluctuation}) we have:
\[
\Pr\!\Bigl\{\,2^{-\Delta_i K} \;\ge\; \tfrac{1}{\delta}\Bigr\}
\;\;\le\;\;
\delta \;\cdot\;\mathbb{E}\bigl[\,2^{-\Delta_i K}\bigr]
\;\;\le\;\;
\delta.
\]
Hence, with probability at least $1-\delta$:
\[
2^{-\Delta_i K} \;<\; \tfrac{1}{\delta}
\quad\Longrightarrow\quad
-\Delta_i K \;<\; \log\!\Bigl(\tfrac{1}{\delta}\Bigr),
\]
with the probability shift from Markov's inequality:
\[
\Delta_i K \;>\; -\,\log\!\Bigl(\tfrac{1}{\delta}\Bigr).
\]
Standard results in algorithmic thermodynamics (e.g., 
\cite{gacs1994boltzmann,levin1984randomness}) give bounds of the form:
\[
\Delta_i K \;=\; \bigl[K(y) - K(x)\bigr] - 
\bigl[\log\pi(x) - \log\pi(y)\bigr]
\;\approx\;
-\,\log_2 P(y,x)\;-\;K(x\mid y)\;+\;\text{(lower-order terms)},
\]
where $K(x\mid y)$ denotes the conditional Kolmogorov complexity.  
Under suitable assumptions (ignoring lower order e.g. $O(\log K(y))$ terms 
or restricting to typical events), we have the complexity correction term:
\[
\Delta_i K 
\;\;\gtrsim\;\;
\log\!\bigl(\tfrac{1}{P(y,x)}\bigr)
\;-\;
K(x\mid y).
\]
Combining the two bounds on $\Delta_i K$ we have (noting $-\log(1/\delta) > 0$):
\[
\Delta_i K > \max \big\{ \log\!\bigl(\tfrac{1}{P(y,x)}\bigr) - K(x\mid y), -\log\!\Bigl(\tfrac{1}{\delta}\Bigr) \big\}
\]
With probability at least $1-\delta$, we therefore have
\[
\Delta_i K 
\;\;>\;\;
\log\!\bigl(\tfrac{1}{P(y,x)}\bigr) 
\;-\;
K(x\mid y)
\;-\;
\text{(constant)}\;\approx\;\log\!\Bigl(\tfrac{1}{\delta}\Bigr).
\]
By Landauer’s principle, each irreversible bit dissipates at least 
$k_B T \ln 2$ joules. Thus, if $\Delta_i K$ bits of irreversibility 
were required, the \emph{minimum energy} is proportional to $\Delta_i K$. 
If we also assume (or bound) that the intelligence output $I(\mathcal{A})$ 
is at most proportional to the number of irreversible operations 
$I(\mathcal{A}) \le \alpha \,\Delta_i K$, for some $\alpha>0$:
\[
P(\mathcal{A}) 
\;=\; 
\frac{E(\mathcal{A})}{\tau}
\;\gtrsim\;
\frac{(\Delta_i K)\,(k_B T \ln 2)}{\tau}.
\]
Rearranging for $I(\mathcal{A})/P(\mathcal{A})$ yields 
an inequality with terms $\Delta_i K$, $\tau$, and additional 
constants. Substituting the bounds on $\Delta_i K$ 
recovers \eqref{eq:NovelAlgBoundExtended} in the theorem statement. The result implies that an intelligent system’s ability to achieve high performance with low energy cost is limited by the inherent improbability of reducing its internal description complexity without incurring an energy cost. In practical terms, it means that only very specialised  architectures may significantly lower $\Phi$. We conclude with a final theorem regarding adaptive efficiency bounds relating changes in the measure of intelligence $\Delta I$ to changes in energy use $\Delta E$.
\end{proof}

%%%% (Structural Adaptivity)
\begin{theorem}[Structural Adaptivity Efficiency Bound]
\label{thm:AdaptExtended}
Let $\mathcal{A}$ be a reconfigurable intelligent system whose architecture is characterised by a structural state $s \in \mathcal{S}$ with description complexity $K(s)$. Suppose that over a time interval $\tau$, the system adapts from structural state $s_1$ to $s_2$ with transition probability $P_s(s_2,s_1)$, and that this adaptation yields an intelligence increment $\Delta I$ while incurring an energy cost $\Delta E$. Then, with probability at least $1-\delta$, we have
\begin{equation}
\label{eq:NovelAdaptExtended}
\frac{\Delta I}{\Delta E} 
\;\le\; 
\frac{1}{\tau}
\Bigl[\log\!\bigl(\tfrac{1}{P_s(s_2,s_1)}\bigr) \;-\; K(s_1\mid s_2)\Bigr]
\;+\; 
\log\!\Bigl(\tfrac{1}{\delta}\Bigr).
\end{equation}
Hence, unless the structural adaptation is extremely unlikely or involves a high cost in complexity change, the efficiency gain from reconfiguration is fundamentally bounded.
\end{theorem}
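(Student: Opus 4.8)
The plan is to mirror the proof of Theorem~\ref{thm:AlgBoundExtended} almost line for line, replacing the information-processing transition $x\to y$ by the structural reconfiguration $s_1\to s_2$ and the coarse-grained state complexities by the structural complexities $K(s_1),K(s_2)$. First I would define the irreversible increase in structural description complexity $\Delta_i K_s = K(s_2)-K(s_1)$ and argue that a reconfiguration is itself a logically irreversible coarse-grained operation: rewiring, merging, or overwriting architectural states discards information about the prior configuration $s_1$, so it fits the same ensemble-free framework of Eq.~\eqref{eqn:entropycoarse}. This justifies invoking the integral fluctuation theorem \cite{jarzynski2011equalities} in the structural setting, yielding $\mathbb{E}\bigl[2^{-\Delta_i K_s}\bigr]\le 1$, where the expectation runs over all reconfigurations weighted by $P_s(s_2,s_1)$.

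Next I would apply Markov's inequality with $R=2^{-\Delta_i K_s}$ and $\eta=1/\delta$, exactly as in the earlier proof, to conclude that with probability at least $1-\delta$ the quantity $\Delta_i K_s$ is bounded below by $-\log(1/\delta)$. In parallel I would invoke the algorithmic-thermodynamic coding relation to write $\Delta_i K_s \gtrsim \log(1/P_s(s_2,s_1)) - K(s_1\mid s_2)$ up to lower-order terms, so that combining the two lower bounds gives the two-sided control on $\Delta_i K_s$ appearing in the statement. Landauer's principle then forces $\Delta E \gtrsim (\Delta_i K_s)\,(k_B T\ln 2)$, and the adaptivity analogue of Eq.~\eqref{eqn:intellirreversible}, namely $\Delta I \le \alpha\,\Delta_i K_s$, lets me rearrange $\Delta I/\Delta E$ into an expression in $\Delta_i K_s$, $\tau$, and constants. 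Substituting the high-probability bound on $\Delta_i K_s$ and dividing through by $\tau$ recovers Eq.~\eqref{eq:NovelAdaptExtended}.

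The main obstacle I anticipate is not the algebra but justifying that the fluctuation theorem and the coding relation genuinely transfer to structural states. For coarse-grained state transitions these rest on a fixed coarse-graining and a physically meaningful measure $\pi$; for an architecture that reconfigures itself I would need to argue that the space $\mathcal{S}$ admits an analogous reference measure and that $P_s$ is the transition kernel of a bona fide stochastic thermodynamic process, rather than a purely logical rewrite with no associated dissipation. I would also have to be careful about the informal absorption of the $O(\log K)$ lower-order terms and the additive constant into the $\log(1/\delta)$ slack, since the clean form of Eq.~\eqref{eq:NovelAdaptExtended} treats these as negligible; making this rigorous would require either restricting to typical reconfigurations or stating the bound only up to an explicit additive constant.
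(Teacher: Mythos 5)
Your proposal follows essentially the same route as the paper's own proof, which likewise transplants the argument of Theorem~\ref{thm:AlgBoundExtended} to the structural transition $s_1\to s_2$: the same fluctuation-theorem inequality $\mathbb{E}[2^{-\Delta_i K_s}]\le 1$, the same Markov-inequality step, the same coding relation $\Delta_i K_s \gtrsim \log(1/P_s(s_2,s_1)) - K(s_1\mid s_2)$, and the same Landauer-plus-proportionality rearrangement of $\Delta I/\Delta E$. Your closing caveats about whether the fluctuation theorem and reference measure genuinely transfer to the structural state space, and about the informal absorption of lower-order terms, are in fact points the paper's proof leaves implicit, so you have if anything been more careful than the original.
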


\begin{proof}
The argument parallels Theorem~\ref{thm:AlgBoundExtended}, 
but we replace $x \to y$ by the structural transition $s_1 \to s_2$. 
Define the irreversible increase in structural complexity 
$\Delta_i K_s = K(s_2) - K(s_1)$. The integral fluctuation theorem again 
implies $\mathbb{E}[2^{-\Delta_i K_s}] \le 1$, and using Markov’s inequality 
as before yields that with probability at least $1-\delta$,
\[
\Delta_i K_s \;\;\gtrsim\;\; 
\log\!\Bigl(\tfrac{1}{P_s(s_2,s_1)}\Bigr)
\;-\; 
K(s_1\mid s_2).
\]
If the \emph{intelligence increment} $\Delta I$ is at best proportional to 
the reconfiguration cost in bits, and the energy cost $\Delta E$ 
cannot undercut the Landauer bound times $\Delta_i K_s$, then 
\(\frac{\Delta I}{\Delta E}\) is bounded above by the expression 
in \eqref{eq:NovelAdaptExtended}. The reconfiguration from $s_1$ to $s_2$ might indeed yield an overall 
energy benefit in subsequent tasks (e.g., fewer irreversible operations needed), 
but Theorem~\ref{thm:AdaptExtended} shows one cannot unboundedly 
boost the ratio $\Delta I / \Delta E$ unless $P_s(s_2,s_1)$ is extremely large 
or the complexity change $K(s_1 \mid s_2)$ is negligible. 
This aligns with the broader notion that structural adaptations 
must themselves obey algorithmic thermodynamic limits.
\end{proof}

\subsubsection{Discussion and Limitations} 
The above extended theorems have several important implications:
\begin{enumerate}
  \item \textit{Intelligence and Irreversibility}. Our work assumes a correlation between irreversible computation to intelligence. Work on the thermodynamics of intelligence remains nascent, with the generalisability and limitations of this approach unclear.
  \item \textit{Fundamental Limits}. Even with highly optimised architectures, the second-law-like constraints emerging from algorithmic thermodynamics impose fundamental lower bounds on energy consumption per unit intelligence. Improvements in efficiency may need to arise from structural changes affecting how information is processed. Nevertheless, the proposed relationship between irreversible operations and energy suggests an important fundamental relationship between energy dissipation and intelligence \cite{takahashi2023scenarios}.  
  \item \textit{Design Trade-offs}. Systems that are more general and adaptable may incur higher energy costs due to the inherent irreversible processing required. Conversely, specialised systems (e.g., for specific tasks) can achieve lower $\Phi$ values but at the expense of flexibility.
  \item \textit{Structural Reconfiguration}. Dynamic reconfiguration (e.g., neuromorphic plasticity, adaptive circuit topologies) may lower energy costs, but only up to the limits imposed by the description complexity of the adaptation process.
  \item \textit{Compression and Intelligence}. Legg and Hutter’s universal measure \cite{legg_universal_2007} (Eq.~\ref{eqn:hutterlegg}) implicitly ties intelligence to minimal Kolmogorov complexity: agents that compress information to achieve higher $I_{\mathrm{LH}}(\pi)$. However, from an algorithmic thermodynamics standpoint, compression cannot be performed for free. Reducing an internal model’s description length $K(\mu)$ by discarding, merging, or overwriting states entails an \emph{irreversible} operation (cf.\ Landauer’s principle), incurring an energy cost proportional to the number of bits erased \cite{landauer1961irreversibility,ebtekar2025foundations}. Such measures of intelligence are subject to the lower bounds and constraints described above. 
\end{enumerate}

\section{Conclusion and Future Work} \label{sec:conclusion}
We have introduced the watts-per-intelligence framework for understanding the energy efficiency of intelligent systems through the lens of algorithmic thermodynamics. Our analysis extends results relating to irreversible computational operations and energy costs to measures of intelligence.  Our extended theorems (Theorems~\ref{thm:AlgBoundExtended} and \ref{thm:AdaptExtended}) reveal that even under certain conditions, the ratio of intelligence output to power consumption is bounded by terms involving the probability of transitions and the conditional description complexities of state changes. These results suggest that any dramatic improvement in energy efficiency will require rethinking the basic design of computational systems, potentially drawing further inspiration from the efficient enzymic and catalysis-based chemical architectures of biology. Algorithmic efficiency bounds imply that total energetic cost of adaptive intelligence is tied to the cumulative irreversible information processing. 
Future research directions in forthcoming works include catalytic architectures for AGI (Part II, forthcoming), empirical validation on different computational substrates (Part III, forthcoming), and using algorithmic entropy as a regulariser for energy efficiency.

% \newpage
\bibliographystyle{splncs04}
\bibliography{refs-thermo}

\end{document}